\newcommand{\st}{\delta}
\newcommand{\LL}{\textsl{L}} 
\newcommand{\RR}{\textsl{R}}
\newcommand{\ass}{:=}
\begin{document}

\title{From Total Assignment Enumeration\\ to Modern SAT Solver%
\thanks{This research was supported
by the Israel Science Foundation (grant no.\ 250/05).
The work of Alexander Nadel was carried out in partial fulfillment of the
requirements for a Ph.D.}} 

\author{Nachum Dershowitz\inst{1} \and Alexander
Nadel\inst{1,2}}

\institute{School of Computer Science, Tel Aviv University, Ramat
Aviv,
Israel\\
\email{\{nachumd,ale1\}@tau.ac.il} \and
Design Technology Solutions Group, Intel Corporation, Haifa, Israel}
\maketitle
\begin{center}
\bf Draft 2009
\end{center}
\pagestyle{plain} 

\begin{abstract}
A new framework for presenting and analyzing the functionality of a modern DLL-based SAT solver is proposed. Our approach exploits the inherent relation between backtracking and resolution. We show how to derive the algorithm of a modern SAT solver from DLL step-by-step. We analyze the inference power of Boolean Constraint Propagation, Non-Chronological Backtracking and 1UIP-based Conflict-Directed Backjumping. Our work can serve as an introduction to a modern SAT solver functionality and as a basis for future work on the inference power of a modern SAT solver and on practical SAT solver design.
\end{abstract}

\section{Introduction}

Propositional satisfiability (SAT) is the problem of determining for a formula in propositional calculus, whether there exists a satisfying assignment for its variables. This problem belongs to a large family of NP-complete problems. SAT has numerous applications, e.g.\ in formal verification~\cite{DBLP:journals/sttt/PrasadBG05}.  Modern complete SAT solvers, based on the original backtrack search algorithm DLL~\cite{dll}, are able to efficiently solve SAT instances arising in real-world applications. DLL was studied and enhanced over the years (see~\cite{FreemanPhD} for an overview), however a major breakthrough was made by the authors of the GRASP SAT solver~\cite{GRASP1},  making it practically efficient.  GRASP introduced a number of innovations in backtracking, united under the title, ``conflict analysis''. These algorithms were further refined in the Chaff solver~\cite{Chaff2001}.

Chaff's conflict analysis, inherited by the most modern SAT solvers (e.g.\ Minisat~\cite{DBLP:conf/sat/EenS03}), includes the following enhancements to DLL: (1) Boolean Constraint Propagation (BCP)~\cite{DP}; (2) Non-Chronological Backtracking (NCB)~\cite{GRASP1}; (3) 1UIP-based Conflict-Directed Backjumping (CDB)~\cite{Chaff2001}; and (4) 1UIP-based Conflict Clause Recording (CCR)~\cite{Chaff2001}.\footnote{Citations are to the first application of these algorithms to SAT. See~\cite{FreemanPhD,SilvaPhD} for an overview of earlier work.} In the existing literature on practical SAT solver design, including \cite{GRASP1,Chaff2001}, the above-mentioned algorithms are considered to be interdependent; they are described and examined together by the means of implication graph analysis~\cite{GRASP1}. 

We show how to add these four enhancements to basic backtracking individually and independently to derive a full-fledged modern solver. Our work can be used as a guide for implementing a modern SAT solver by carrying out a well-defined sequence of steps, summarized in the conclusion, which also includes references to papers on data structures and heuristics for SAT, not discussed here. 

Proof (inference) systems can be compared in terms of the sizes of the shortest proofs (refutations) they sanction~\cite{DBLP:journals/jsyml/CookR79}. We say that $Q$ is \emph{at least as strong as} $P$ ($Q \lesssim P$) if every unsatisfiable CNF formula has a refutation in $Q$ that is no longer than the minimal refutation in $P$. (This is a quasi-ordering.) We say that $P$ and $Q$ are \emph{equally strong} ($P \sim Q$) if minimal refutations of every formula are of the same size. A proof system $Q$ \emph{p-simulates} $P$ if every formula has a refutation in $Q$ that is at most polynomially longer than $P$'s.

\emph{General resolution} is one of the most popular and simplest automatable proof systems. \emph{Tree-like resolution (TLR)} is a restricted version, wherein a proof  takes the form of a tree, rather than a directed acyclic graph (dag). The \emph{size} of a resolution refutation is the number of resolvent clauses generated. DLL-based solvers can also be seen as proof systems, where the size of a proof is the number of decisions made. 

The inference power of DLL with Conflict Clause Recording has been analyzed in a number of recent works~\cite{beame03understanding,DBLP:conf/lpar/Gelder05a,ClRes}. In particular, in~\cite{beame03understanding} it is shown that DLL with CCR and unlimited restarts p-simulates general resolution, where ``restarts'' \cite{restarts} is the technique that allows for restarting the search at any decision point, keeping conflict clauses. However, inference power results depend strongly on the underlying formalization of DLL and CCR. In particular, it has been observed in~\cite{ClRes} that the formalization of CCR, used in~\cite{beame03understanding}, was too general. It allowed the algorithm to continue the search, even if one of the clauses is falsified by the current assignment. Reference~\cite{ClRes} used another model, under which the solver is forced to use BCP and backtracking once a falsified clause is identified, and proved that DLL with CCR can ``effectively'' p-simulate general resolution in a sense made precise in~\cite{ClRes}. The problem of whether or not DLL with CCR can p-simulate general resolution remains open. 

Our framework should be helpful for future work on analyzing the power of DLL with Conflict Clause Recording, but we concentrate on analyzing the power of other algorithms, implemented in modern SAT solvers. To the best of our knowledge, the inference power of BCP, 1UIP-based CDB and NCB has never been examined in literature. (This is surprising, since these algorithms are widely used in modern SAT solvers.) We demonstrate that DLL with 1UIP-based CDB, DLL with NCB, plain DLL and TLR are equally strong. We show that although DLL with BCP p-simulates DLL, there is a formula whose shortest refutation in DLL with BCP is linearly longer than in DLL. We also show that DLL is at least as strong as DLL with NCB, 1UIP-based CDB and BCP, intuitively meaning that a SAT solver without Conflict Clause Recording is not stronger than DLL or TLR. Our results follow from simple analysis of the impact of each algorithm on the resolution refutation construction. 

A fundamental enhancement to the DLL algorithm that should be added before others is \emph{parent clause} maintenance.\footnote{This concept can be traced back to the ``assertion clause'' of~\cite{GRASP1}.} A modern SAT solver associates every flipped literal with a parent clause -- a clause, composed of the flipped literal and a disjunction of a subset of previously assigned literals, negated. Intuitively, the parent clause is a sufficient reason for the flip. It is derived by resolution upon backtracking. A fundamental notion, which we will base our analysis on, is \emph{parent resolution} of a flipped literal, that is, the resolution derivation of the parent clause. 

In previous work~\cite{DBLP:conf/sat/DershowitzHN07}, we proposed comparing and enhancing learning schemes for a modern SAT solver by understanding it as a decision-tree construction engine. 
The current work is based on the well-studied and more general concept of resolution-refutation. In~\cite{DBLP:conf/sat/DershowitzHN07}, we introduced the notion of decision tree pruning, where ``backward pruning'' reduces the size of the newly generated left decision subtree and ``forward'' pruning is a measure for the impact of the Conflict Clause Recording scheme on the subsequent search. The empirical advantage of the 1UIP scheme over other schemes was justified by showing that it contributes to backward and forward pruning more than other schemes do -- both analytically and empirically. This result can easily be understood in the new framework, where the ``left decision subtree'' of~\cite{DBLP:conf/sat/DershowitzHN07} corresponds to the parent resolution of a flipped literal, and the notion of reducing the number of left decision subtree nodes corresponds to reducing the size of a newly generated parent resolution. Here, we reflect on the contribution of the described algorithms to backward search pruning, relating our analysis to the results of~\cite{DBLP:conf/sat/DershowitzHN07}. 


Nieuwenhuis et al.~\cite{NieuwenhuisetalJACM2006} provide a formalization of modern complete SAT algorithms, allowing one to formally reason about their basic properties, such as completeness and termination, in a simple way. Their formalism allows one to easily extend DLL to serve as a basis for algorithms for Satisfiability Modulo Theories (SMT). Our framework is different in that it is meant to be used for practical SAT solver research and proof complexity considerations

Section~\ref{sec:def} provides basic definitions. Sections~\ref{sec:alltlr}--\ref{sec:mssa} show how to construct a modern SAT solver starting from Total Assignment Enumeration. We give some results on inference power and analyze the contributions of various algorithms to backward search pruning. This is followed by conclusions.

\section{Definitions}
\label{sec:def}

We denote \emph{(propositional) variables} by lowercase Latin letters. A \emph{literal} is a variable $v$ or its
negation $\lnot v$. The Boolean values
are denoted 1 and 0.
For variable $v$ and Boolean value $\sigma$, $v^\sigma$ is
the corresponding literal; that is, $v^1=v$ and $v^0=\lnot v$. A 
\emph{Conjunctive Normal Form (CNF) formula} is a set (or conjunction) of
\emph{clauses} $\left\{ C_1, \ldots, C_m \right\}$, each clause being a disjunction
(or multiset) of literals. We assume that the
\emph{input formula} does not contain the empty clause $\square$.

A clause $C$ is a \emph{resolvent} of clauses $D_1$ and $D_2$ on \emph{pivot variable}
$v \in D_1$, denoted  $C = D_1 \otimes^v D_2$, if $\lnot v \in D_2$, and
$C = D_1 \cup D_2 \setminus \left\{v,\lnot v\right\}$. 
Resolvent $C$ is \emph{non-trivial} if 
$D_1$ and $D_2$ are
\emph{non-redundantly resolvable}, in the sense that there is a \emph{pivot variable} $v=pivot(D_1,D_2)$, such that
the resolvent of $D_1$ and $D_2$ on $v$ is not a tautology.

A \emph{general resolution refutation} of a given formula $\alpha = \left\{ C_1, \ldots, C_m \right\}$ is a dag $G_{\alpha}=(\alpha\cup\left\{C_{m+1}, \ldots, C_{m+n}\right\},E)$, whose nodes are (associated with) clauses $C_i$ and whose edges $E$ represent resolution relations between clauses. Nodes  corresponding to initial clauses are the sources of the graph. Each non-source node $C_i$ is associated with a \emph{pivot variable} $p_i$. Each edge $(i,j)\in E$,  from node $C_i$ to $C_j$, has an associated Boolean value $\tau(i,j)$ and a status $\st(i,j)$, which can either be $\LL$ or $\RR$, standing for left and right, respectively.\footnote{Status values make sense in the context of tree-like resolution.}
Each non-source node $C_i$ has two incoming edges $(j,i)$ and $(k,i)$, associated with opposite Boolean values and opposite statuses. 
Nodes at the other side of a left or right incoming edge are called the \emph{left} and \emph{right child} of $C_i$, respectively. 
Clauses $C_j$ and $C_k$ are non-redundantly resolvable on $p_i$ and $\tau(j,i) = \rho$ iff $p_i^{\lnot \rho} \in C_j$. A resolution refutation is \emph{complete} if the last clause $C_{m+n}$ is the empty clause $\square$; otherwise, it is \emph{partial}. 
The \emph{size} of a resolution refutation is the number of non-source nodes.

An example of a general resolution refutation, of size 4, appears in Fig.~\ref{fig:grr}.
The corresponding resolvent clause appears at each non-source node. The pivot variables are not shown; instead, the literal $p_i^{\tau(e)}$ labels each edge $e$.

A \emph{tree-like resolution (TLR) refutation} of a formula $\alpha$ is a resolution refutation $G$, such that a non-source clause appears on each path from a source to the target clause only once.
In other words, $G$ without the source nodes forms a tree. 
A \emph{regular resolution refutation} of $\alpha$ is a general resolution refutation, such that pivot variables along each particular path from a source to $\square$ are different. Each node $C_i$ of a valid resolution refutation $G$ is referred to as a \emph{root of a tree-like resolution refutation}, if $C_i$ is a root of a tree in $G$ with only non-source nodes, in which case $G$ is a \emph{resolution derivation} of $C_i$.

\section{SAT Solver Skeleton}
\label{sec:alltlr}

Modern SAT solvers are rooted in Total Assignment Enumeration (TAE) -- a DFS search in the assignment space, checking the satisfiability of each clause only after all variables assigned. The only difference between the DLL algorithm~\cite{dll} and TAE is that DLL checks satisfiability of each clause after every assignment. Both TAE and DLL can be viewed as proof systems for unsatisfiable formulas. We define the proof size for both algorithms as the number of decisions, where flip operations are not considered to be decisions.

\begin{proposition}\label{th:taeb}
TAE does not p-simulate DLL.
\end{proposition}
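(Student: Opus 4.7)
The plan is to exhibit an unsatisfiable formula family $\alpha_n$ on $n$ variables for which DLL has a constant-size refutation but TAE requires size $\Omega(2^n)$, an exponential gap that rules out p-simulation.

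The intuition is that TAE is blind to any clause that has already been falsified until a full assignment is reached, whereas DLL can short-circuit as soon as a single clause dies. So a good candidate is a formula whose unsatisfiability is witnessed by just two clauses on a single variable, padded out with trivial occurrences of the remaining $n-1$ variables. Concretely I would take
\[
\alpha_n \ass \{\{x_1\}, \{\lnot x_1\}\} \cup \{\{x_i\} : 2 \le i \le n\}
\]
over the variables $x_1, \ldots, x_n$.

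For DLL I would choose $x_1$ as the first decision variable (permitted, since proof size is the shortest over all decision orderings): setting $x_1 \ass 0$ falsifies $\{x_1\}$, the forced flip to $x_1 \ass 1$ falsifies $\{\lnot x_1\}$, and no earlier decision remains to be flipped, so the refutation consists of a single decision. For TAE, since no clause is inspected until all $n$ variables have been assigned and no total assignment satisfies $\{x_1\}$ and $\{\lnot x_1\}$ simultaneously, every one of the $2^n$ leaves must be reached; regardless of the variable ordering, the search tree is therefore a complete binary tree of depth $n$ and contributes $2^n - 1$ decisions (its internal nodes; the second value explored at each such node is reached by a flip, which does not count).

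The gap between $2^n - 1$ and $1$ is super-polynomial, so TAE does not p-simulate DLL. No real obstacle arises; the only point demanding any justification is that, no matter how TAE orders its decisions, it must still expand every branch to full depth before detecting any conflict, which is immediate from the TAE definition.
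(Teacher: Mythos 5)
Your proposal is correct and is essentially the paper's own proof: the paper uses the formula $a \land \lnot a$ viewed over $n$ variables, notes the shortest DLL proof has size $1$, and that any TAE proof has size $2^n-1$. Your padding with extra unit clauses $\{x_i\}$ is an inessential cosmetic variation on the same construction and argument.
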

\begin{proof}
Consider the formula $a \land \lnot a$ over $n$ variables. The size of the shortest DLL proof is 1. The size of any TAE proof is $2^n-1$.
\qed
\end{proof}

\begin{algorithm}[t]
\caption{SAT Solver Skeleton (SSS)}\label{alg:dll}
\begin{algorithmic}[1]
\STATE $\textit{Instance} \ass \left\{C_1,C_2, \ldots, C_m \right\}$
\STATE $d \ass 0$
\STATE $G \ass \textit{InitResolutionRefutation}(Instance)$
\LOOP \label{line:mainloop}
\STATE $\textit{NewParent} \ass none$
\STATE $d \ass d + 1$\label{line:dllpp}
\STATE $\left\langle v_d,\sigma_d \right\rangle \ass \textit{ChooseNewLiteral\/}(v_1,\ldots,v_{d-1})$ \label{line:decide}
\STATE $LRStatus(d) \ass \LL$ \label{line:statusleft}
\IF{$\sigma_{1:d}(Instance)=1$}
\RETURN satisfiable
\ENDIF
\WHILE{$\exists l \in \textit{Instance} \cup \left\{\textit{NewParent}\right\} : \sigma_{1:d}(C_l)=0$} \label{line:cal}
\STATE $Parent(d) \ass l$ \label{line:frds}
\STATE $\sigma_d \ass \lnot \sigma_d$ \label{line:flip}
\STATE $LRStatus(d) \ass \RR$ \label{line:statusright}
\IF{$\exists r \in \textit{Instance}: \sigma_{1:d}(C_r)=0$}\label{line:ifconfafterflip}
\STATE $\textit{NewParent} \ass r$
\WHILE{$d > 0$ and ($LRStatus(d)=\RR$ or $v_d^{\lnot \sigma_d} \notin C_{\textit{NewParent}}$)} \label{line:bal}
\IF{$LRStatus(d)=\RR$ and $v_d^{\lnot \sigma_d} \in C_{\textit{NewParent}}$ }\label{line:litincond}
\STATE $\textit{NewParent} \ass \textit{AddNode}(G,Parent(d),\textit{NewParent},v_d)$\label{line:addtoresref}
\ENDIF
\STATE $d \ass d - 1$\label{line:dminus1}
\IF{$d = 0$}\label{line:dis0}
\RETURN unsatisfiable\label{line:dllreturn}
\ENDIF
\ENDWHILE
\ENDIF
\ENDWHILE
\ENDLOOP
\end{algorithmic}
\end{algorithm}

\begin{figure}[t]
\scriptsize
\subfigcapskip=3pt
\renewcommand{\subcapsize}{\scriptsize}
\subfigure[{A general resolution refutation $G$ of $\alpha$ or of $\alpha'$}]{ \label{fig:grr}
\psmatrix[colsep=7pt,rowsep=9pt]
&&&$\square$& \\
&$a$&&&$\lnot a$\\
$a \lor b$&&$\lnot b$&&$\lnot a \lor b$\\
&$\lnot b \lor c$&&$\lnot b \lor \lnot c$& \\
\endpsmatrix
\psset{shortput=nab,labelsep=3pt,arrows=<-} 
\ncline{1,4}{2,2}\Bput{$\lnot a$} 
\ncline{1,4}{2,5}\Aput{$a$} 
\ncline{2,2}{3,1}\Bput{$\lnot b$} 
\ncline{2,2}{3,3}\Aput{$b$} 
\ncline{3,3}{4,2}\Bput{$\lnot c$}
\ncline{3,3}{4,4}\Aput{$c$}
\ncline{2,5}{3,3}\Bput{$b$}
\ncline{2,5}{3,5}\Aput{$\lnot b$}}
\subfigure[{Snapshot of invocation  of Algorithm~\ref{alg:dll} on $\alpha$}]{ \label{fig:sdll}
\psmatrix[colsep=7pt,rowsep=9pt]
&&& \\
&$$&&\\
$a \lor b$&&$$&\\
&$\lnot b \lor c$&&$\lnot b \lor \lnot c$ \\
\endpsmatrix
\psset{shortput=nab,labelsep=3pt,arrows=-} 
\ncline{1,4}{2,2}\Bput{$\lnot a$} 
\ncline{2,2}{3,1}\Bput{$\lnot b$} 
\ncline{2,2}{3,3}\Aput{$b$} 
\ncline{3,3}{4,2}\Bput{$\lnot c$}
\ncline{3,3}{4,4}\Aput{$c$}
}
\subfigure[{Backtracking and flipping, given Fig.~\ref{fig:sdll}}]{ \label{fig:abf}
\psmatrix[colsep=7pt,rowsep=9pt]
&&&$$& \\
&$a$&&&$$\\
$a \lor b$&&$\lnot b$&&\\
&$\lnot b \lor c$&&$\lnot b \lor \lnot c$& \\
\endpsmatrix
\psset{shortput=nab,labelsep=3pt,arrows=-} 
\ncline{1,4}{2,2}\Bput{$\lnot a$} 
\ncline{1,4}{2,5}\Aput{$a$} 
\psset{shortput=nab,labelsep=3pt,arrows=<-} 
\ncline{2,2}{3,1}\Bput{$\lnot b$} 
\ncline{2,2}{3,3}\Aput{$b$} 
\ncline{3,3}{4,2}\Bput{$\lnot c$}
\ncline{3,3}{4,4}\Aput{$c$}
}
\subfigure[{Snapshot of invocation of Algorithm~\ref{alg:dll} on $\alpha'$}]{ \label{fig:s2}
\psmatrix[colsep=7pt,rowsep=9pt]
&&$$&&\\
&$$&&&\\
$a \lor \lnot b$&&$$&&\\
&$\lnot b \lor c$&&$\lnot b \lor \lnot c$&\\
\endpsmatrix
\psset{shortput=nab,labelsep=3pt,arrows=-} 
\ncline{1,3}{2,2}\Bput{$b$}
\ncline{2,2}{3,1}\Bput{$\lnot a$}
\ncline{2,2}{3,3}\Aput{$a$}
\ncline{3,3}{4,2}\Bput{$\lnot c$}
\ncline{3,3}{4,4}\Aput{$c$}
}
\subfigure[{1UIP-based CDB and flipping, given Fig.~\ref{fig:sdll}; backtracking and flipping, given Fig.~\ref{fig:s2}}]{ \label{fig:after}
\psmatrix[colsep=7pt,rowsep=9pt]
&&&$$&&&&&&&&& \\
&$\lnot b$&&&$$&&&&&&&&\\
$\lnot b \lor c$&&$\lnot b \lor \lnot c$&&&&&&&&&&\\
\endpsmatrix
\psset{shortput=nab,labelsep=3pt,arrows=-} 
\ncline{1,4}{2,2}\Bput{$b$} 
\ncline{1,4}{2,5}\Aput{$\lnot b$} 
\psset{shortput=nab,labelsep=3pt,arrows=<-} 
\ncline{2,2}{3,1}\Bput{$\lnot c$} 
\ncline{2,2}{3,3}\Aput{$c$} 
}
\subfigure[{NCB effect}]{ \label{fig:ncb}
\psmatrix[colsep=7pt,rowsep=9pt]
&&&$$&&&$$\\
&&$$&&&$$&\\
&$$&&&$a \lor b$&&$$\\
$a \lor b$&&&&&&\\
\endpsmatrix
\psset{shortput=nab,labelsep=3pt,arrows=-} 
\ncline{1,4}{2,3}\Bput{$\lnot a$}
\ncline{2,3}{3,2}\Bput{$c$}
\ncline{3,2}{4,1}\Bput{$\lnot b$}
\ncline{1,7}{2,6}\Bput{$\lnot a$}
\ncline{2,6}{3,5}\Bput{$\lnot b$}
\ncline{2,6}{3,7}\Aput{$b$}
}
\caption[Examples of search trees, resolution refutations and the impact of various algorithms.]{Examples for $\alpha = (a \lor b) \land (\lnot b \lor c) \land (\lnot b \lor \lnot c) \land (\lnot a \lor b)$ and $\alpha' = \alpha \land (a \lor \lnot b)$} \label{fig:ex}
\end{figure}

Algorithm~\ref{alg:dll}, which we refer to as the \emph{SAT Solver Skeleton (SSS)}, is an implementation of DLL, enhanced by parent clause and parent resolution maintenance.

First, we depict the general flow of Algorithm~\ref{alg:dll}. The algorithm comprises three loops: the main loop (starting at line~\ref{line:mainloop}), the conflict analysis loop (line~\ref{line:cal}) and the backtracking loop (line~\ref{line:bal}). Each iteration of the main loop increases the decision level $d$ and assigns an unassigned decision variable $v_d$ to some value $\sigma_d$. If the formula is satisfied, the algorithm returns. Otherwise, if none of the clauses is falsified by the current assignment $\sigma_{1:d}$, the main loop continues. (We denote by $\sigma_{1:d}$ the partial assignment induced by assignments to decision variables corresponding to decision levels $1 \ldots d$.) If one of the clauses $C_l$ is falsified by $\sigma_{1:d}$, the algorithm enters the conflict analysis loop. In this case, we say that a \emph{conflict} takes place in a \emph{blocking clause} $C_l$. The conflict analysis loop continues working until a new decision is required or the formula is proved to be unsatisfiable. As a first step, it flips the value of $v_d$. If no conflict follows, a new decision is required and the algorithm exits the conflict analysis loop and returns to the main loop. If a conflict follows, the algorithm enters the backtracking loop. The backtracking loop is responsible for backtracking to the lowest possible decision level $d$, whose decision variable can be flipped. The backtracking loop may also prove that no such decision level exists, in which case the formula is unsatisfiable. 

A decision level is \emph{left} before its decision variable has been flipped,
and \emph{right} after. The status of each decision level $d$ is maintained in $LRStatus(d)$ in Algorithm~\ref{alg:dll}. The algorithm maintains a \emph{parent clause} $Parent(d)$ for each right decision level $d$, which must be a logical consequence of the initial formula, and it must consist of the literal $v_d^{\sigma_d}$ and a subset of literals, falsified by $\sigma_{1:d-1}$. Intuitively, the parent clause explains why $v_d$ was flipped. It can be seen as an implication $\alpha \Rightarrow v_d$, where $\alpha$ is a conjunction of a subset of variables assigned before $v_d$. The parent clause is derived by tree-like resolution during backtracking. The derivation of the parent clause is the \emph{parent resolution} of $v_d$. The following two invariants must hold throughout execution of SSS:

\begin{enumerate}
    \item \textbf{Flip-consistency:} For each right decision level $d$, $Parent(d)$ is a valid parent clause. \label{inv:1}
    \item \textbf{Resolution-consistency:} $G$ is a valid TLR refutation; and for each right decision level $d$, the node $Parent(d)$ is a root of a valid tree-like resolution refutation.
\end{enumerate}

Now we describe the parent resolution and parent clause creation process, demonstrating that the two invariants hold.\footnote{For a formal proof, see~\cite{NadelPhD}.} The parent clause is set for each flip at the beginning of the conflict analysis loop (line~\ref{line:frds}). Suppose that we are in the first iteration of the conflict analysis loop. The clause $C_l \in \textit{Instance}$ is falsified by $\sigma_{1:d}$ before the flip. (Note that $\textit{NewParent}$, whose usage will be described shortly, is always $none$ at the first iteration of the conflict analysis loop.) It is easy to check that $C_l$ is a valid parent clause and it is the root of a trivial tree-like resolution refutation. 

Now we analyze the case when the parent clause is created by the backtracking loop. Suppose that there is a conflict after the flip, made during the first iteration of the conflict analysis loop. The backtracking loop maintains a \emph{backtracking clause} $\textit{NewParent}$. Each iteration of the backtracking loop maintains the \textit{backtracking invariant}: $\sigma_{1:d-1}(\textit{NewParent} \setminus \left\{v_d^{\lnot \sigma_d}\right\})=0$ and $\textit{NewParent}$ is a root of a tree-like resolution refutation. Observe that if the backtracking invariant holds, then the flip-consistency and the resolution-consistency invariants will hold after the backtracking loop finishes. 

Before the first iteration, the backtracking clause is initialized to the newly discovered blocking clause. The parent clause of $d$ and the blocking clause, encountered after flipping $v_d$, are non-redundantly resolvable. Thus, a new valid node is added to $G$ by the algorithm.\footnote{The function $\textit{AddNode}$ takes a valid resolution refutation $G$, a left son, a right son and a pivot variable, and creates a new node in $G$, returning its index.} The resulting clause $Parent(d) \otimes^{v_d} \textit{NewParent}$ becomes the new backtracking clause. When the algorithm visits a decision level $d$ on subsequent iterations of the backtracking loop, one of the following cases happen: 
\begin{enumerate}
  \item The decision level $d$ is 0. In this case, the formula is unsatisfiable, and the backtracking clause must be $\square$ by the backtracking invariant.
	\item The decision level $d$ is left and the negation of its decision literal belongs to the backtracking clause ($LRStatus(d)=\LL$ and $v_d^{\lnot \sigma_d} \in C_{\textit{NewParent}}$). The backtracking loop terminates as it has found a variable to flip and has built its parent clause and resolution.
	\item The decision level $d$ is right and $v_d^{\lnot \sigma_d} \in C_{\textit{NewParent}}$. The backtracking loop resolves the parent clause of $d$ with the backtracking clause to receive a new backtracking clause. One can easily verify that: (1) $C_{Parent(d)}$ and $C_{\textit{NewParent}}$ are non-redundantly resolvable with pivot variable $v_d$; (2) the new backtracking clause must be falsified by $\sigma_{1:d-1}$, and it must be a root of a TLR refutation.
	\item The decision level $d$ is left and $v_d^{\lnot \sigma_d} \notin C_{\textit{NewParent}}$. In this case, the backtracking loop of SSS does not flip $v_d$ and continues backtracking. Indeed, the backtracking clause must be falsified by $\sigma_{1:d-1}$; thus there is no satisfying assignment under $v_1=\sigma_1, \ldots, v_{d-1}=\sigma_{d-1}, v_d=\lnot \sigma_d$. The behavior of our algorithm in this case shows the difference between SSS and plain DLL, which flips every left decision variable.
	\item The decision level $d$ is right and $v_d^{\lnot \sigma_d} \notin C_{\textit{NewParent}}$. The algorithm backtracks to the next decision level without carrying out the resolution operation. We say that \emph{resolution backward pruning} takes place in this case. We relate search pruning to the ability of the algorithm to reduce the number of nodes in the final resolution refutation of the formula. In our case, the parent resolution of $v_d$ is not included in the derivation of the new backtracking clause; thus it will not be included in the derivation of the newly flipped variable, which in turn means that it will not be included in the final resolution refutation of the formula. Resolution backward pruning corresponds to one of the three cases of backward tree pruning of~\cite{DBLP:conf/sat/DershowitzHN07} (``skipping of inactive lf-variables, not connected to the conflicting clause vertices''). We will encounter the other two types of backward pruning when discussing NCB and 1UIP-based CDB.
\end{enumerate}

Consider the snapshot of an SSS invocation after the second conflict in Fig.~\ref{fig:s2}. The current decision level is 3. In the first iteration of the backtracking loop, a new clause $\lnot b = \lnot b \lor c \otimes^c \lnot b \lor \lnot c$ is created and the decision level becomes 2. The right decision variable $a$ does not appear in the newly created clause; hence backward pruning takes place. Backtracking continues and no new clause is created inside the backtracking loop during this iteration. The backtracking stops at decision level 1, as the clause $\lnot b$ will be a valid parent clause after flipping the variable $b$. The situation that results after the flip appears in Fig.~\ref{fig:after}. The bottom-left part of the figure, which includes nodes with clauses and arrowed edges, represents the parent resolution of $b$, created by the backtracking loop. Note that the parent resolution of $\lnot a$, which consists of the single clause $a \lor \lnot b$, does not appear in the new parent resolution. Another example of backtracking and flipping is the transformation from Fig.~\ref{fig:sdll} to Fig.~\ref{fig:abf}.

\section{A Tree-Like SAT Solver}
\label{sec:b1uip}

Next, we show how to augment Algorithm~\ref{alg:dll} with Boolean Constraint Propagation, 1UIP-based Conflict-Driven Learning and Non-Chronological Backtracking -- separately and independently. We analyze the inference power of each algorithm. We begin by showing that Algorithm~\ref{alg:dll}, DLL and TLR are equally strong.


\subsection{The Power of SAT Solver Skeleton}
\label{sec:ppsss}

The only difference between SSS and DLL in terms of search space exploration is the fact that DLL flips every left variable, whereas SSS may skip flipping some variables. We show that parent clause and resolution maintenance do not change the inference power of DLL. It remains the same as TLR.\footnote{The fact that DLL is identical to TLR is well-known.} This observation means that parent clause and resolution maintenance is a heuristic, enabling the finding of shorter proofs by compressing a proof on-the-fly.

\begin{proposition}\label{th:dlltrr}
TLR $\sim$ SSS $\sim$ DLL.
\end{proposition}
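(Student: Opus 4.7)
The plan is to close the chain TLR $\lesssim$ SSS $\lesssim$ DLL $\lesssim$ TLR, from which the three equivalences follow by antisymmetry of $\lesssim$ modulo $\sim$. The last link, DLL $\lesssim$ TLR, is the classical correspondence (noted in the paper as folklore) between a tree-like refutation of size $n$ and a DLL decision tree with $n$ internal nodes, obtained by peeling off one resolvent per decision along the pivot structure of the refutation; I would simply appeal to it.

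For SSS $\lesssim$ DLL, I would take any minimal DLL refutation of a formula $\alpha$, determined by some decision heuristic $H$, and run SSS on $\alpha$ using the same $H$. SSS's decision tree under $H$ is a subtree of DLL's under $H$: whenever SSS skips a flip at a left level $d$ because $v_d^{\lnot \sigma_d} \notin C_{\textit{NewParent}}$, it prunes off the entire right subtree that DLL would have explored, and no new branches are ever added. Hence the number of SSS decisions is at most the number of DLL decisions, and the minimum SSS proof of $\alpha$ is at most as large as the minimum DLL proof.

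For TLR $\lesssim$ SSS, the resolution-consistency invariant already guarantees that $G$ is a valid TLR refutation of $\alpha$ when SSS halts (the empty clause arises from the final resolution step, since at decision level~$1$ the parent clause and the backtracking clause must be the singletons $\{v_1^{\sigma_1}\}$ and $\{v_1^{\lnot \sigma_1}\}$). What remains is to bound the number of non-source nodes of $G$ by the SSS proof size. I would charge each \textit{AddNode} call on line~\ref{line:addtoresref} to the right decision level $d$ at which it fires, observe that every right level is created by a single flip on line~\ref{line:flip}, that each flip is caused by a unique decision that encountered a conflict, and that every right level is consumed by the backtracking loop at most once (once line~\ref{line:dminus1} decrements past it, any later visit to that depth is a fresh left level produced by a new decision). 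Chaining these gives that the number of \textit{AddNode} calls is at most the number of right levels, which equals the number of flips, which is at most the number of decisions --- the required bound.

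The main obstacle I expect is the accounting inside the TLR $\lesssim$ SSS step: making the ``each right level is consumed at most once'' observation rigorous, and confirming that the terminal $\square$ is indeed installed in $G$ by a genuine \textit{AddNode} call rather than existing only as the final value of \textit{NewParent}. Once that is pinned down, the remaining two reductions are essentially bookkeeping.
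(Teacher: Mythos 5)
Your proof is correct, but it closes the cycle of inequalities in the opposite orientation from the paper, so all three legs are genuinely different lemmas. The paper shows $\min_{SSS}\le\min_{TLR}$ (by letting SSS walk a \emph{minimal} TLR refutation depth-first and arguing, via a refutation-shortening contradiction, that conflicts can only occur at leaves), $\min_{DLL}\le\min_{SSS}$ (a shortest SSS invocation never skips a flip, hence is a DLL invocation), and $\min_{TLR}\le\min_{DLL}$ (simulate the shortest DLL run by SSS and read off the refutation it emits). You instead prove the three reversed inequalities: $\min_{DLL}\le\min_{TLR}$ by the classical TLR-to-DLL simulation, $\min_{SSS}\le\min_{DLL}$ by observing that SSS run with the same decision choices only prunes right subtrees of the DLL tree and re-synchronizes with it after each skipped flip, and $\min_{TLR}\le\min_{SSS}$ by the node-counting argument (each \textit{AddNode} call consumes one right-level incarnation, each of which was created by one flip of one decision, and $\square$ is itself installed by the final \textit{AddNode} call). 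What your route buys is the avoidance of the paper's two most delicate steps --- the leaf-only-conflict argument for the TLR-guided SSS run and the ``shortest SSS never skips'' argument --- replacing them with a monotone-pruning observation and an appeal to the folklore TLR $\to$ DLL simulation (which the paper's own footnote declares well-known, though the paper does not actually use it as a lemma and remains self-contained). Your counting lemma for the last leg is essentially the same bookkeeping the paper uses for its third leg, applied to SSS directly rather than to SSS simulating DLL; your stated worries there (each right level consumed once, $\square$ created by \textit{AddNode}) do resolve exactly as you anticipate, so no gap remains.
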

\begin{proof}
We consider DLL to be a simplified version of Algorithm~\ref{alg:dll}, which flips every left decision variable and does not maintain parent clauses and resolution refutations. 
We will prove in turn that: TLR $\gtrsim$ SSS; SSS $\gtrsim$ DLL; DLL $\gtrsim$ TLR. 

Consider the shortest tree-like resolution refutation $H$ of size $k$ of any unsatisfiable formula. We show that there exists an SSS invocation, whose size is at most $k$. We let the SSS algorithm explore the reversed dag $\overline{H}$ in a depth-first (DFS) manner, starting with $\square$ assigning literals, associated with the edges of $H$. We denote the currently visited node of $H$ by $C_h$. We enforce the SSS algorithm to always choose clauses appearing at the leaves of $H$ as blocking clauses in case of ambiguity. It is sufficient to show that the following invariants always hold: (1) A conflict is encountered by SSS iff a leaf of $H$ is reached; (2) SSS will flip exactly the variable that DFS backtracks to. First, observe that the second invariant must hold unless the first one is violated. Indeed, the backtracking clause must be the visited clause of $H$, given that $H$ is a valid TLR refutation. Second, note that if a leaf $C_h$ of $H$ is reached, a conflict must be found by SSS in $C_h$, since all the literals of $C_h$ are assigned 0 by construction. Finally, we show that if a conflict is found by SSS in a clause $D$, then $C_h$ must be a leaf. Suppose to the contrary, $C_h$ is not a leaf. Denote the only path from $C_h$ to $\square$ by $M = \left\{M_1, \ldots, M_k \right\}$, where 
$M_1=C_h$ and $M_k=\square$. Then, the following operations would transform $H$ to a TLR refutation shorter than $H$: (1) Replace $C_h$ by $D$ in $H$ and delete the derivation of $C_h$ from $H$. This operation decreases the size of $H$, however $H$ is no more a TLR, unless $C_h=D$; hence we need to ``fix'' it. (2) For every literal $l \in D \setminus C_h$, augment every clause of $M$ with $l$ starting with $M_1=C_h$, until a node with pivot variable $l$ is reached. The last condition must hold, since otherwise $l$ would not be assigned; (3) Remove the literals of $C_h \setminus D$ from the clauses of $M$. This step might leave unnecessary nodes in $M$ -- nodes, one of whose sons does not contain the pivot variable. (4) Remove the unnecessary nodes from $M$ together with the resolution derivation of the son that does not appear in $M$. 

Now consider the shortest SSS invocation of size $k$. There exists a DLL invocation of at most the same size, since the shortest invocation of SSS must flip every left variable. Indeed, if a left variable $b$ was skipped, then not making this decision would result in an SSS invocation of size $k-1$.

Finally, consider the shortest DLL invocation. We show that there exists a TLR refutation of at most the same size. Consider an invocation of SSS taking the same decisions. Such SSS invocation is valid, and it refutes the given formula, since it a left decision level cannot be  skipped, otherwise the DLL invocation would not be the shortest one. SSS outputs a tree-like resolution of at most the same size, since any node of the TLR refutation corresponds to a backtracking step, and backtracking steps correspond to decisions in a one-to-one manner.
\qed
\end{proof}

\subsection{Boolean Constraint Propagation (BCP)}
\label{sec:BCP}

A clause $C$ is \emph{a unit clause} at decision level $d$ if $C$ evaluates to a lone literal $v^{\rho}$ under $\sigma_{1 : d}$.

\emph{Boolean Constraint Propagation (BCP)} is the following process, carried out by the solver at each decision point: If there is a unit clause $C$ at level $d$, pick the opposite literal $v^{\lnot\rho}$ as the next decision. Observe that the algorithm would then encounter a conflict and would flip the value of $v$ automatically in the conflict analysis loop. It is accepted in the literature to refer to $C$ after this operation as a parent clause of an \emph{implied literal\/} $v^{\rho}$. In our formulation, implied literals are treated as regular flipped decision variables, in contrast to the separation between decision and implied variables. This approach allows us to avoid implication graph terminology. 

It is widely accepted that BCP helps accelerate modern SAT solvers, though it typically consumes 80--90\% 
of a solver's run-time~\cite{Chaff2001}. The added value of BCP is that it allows the solver to quickly propagate information and find conflicts. However, this claim is accurate only when unit clauses, identified by BCP, are relevant for the resolution process. We show below that BCP can decrease the inference power of DLL by a linear factor; therefore, at least in some cases, BCP may slow down the solver by making unnecessary propagations. Nevertheless, as we will see, the damage is never exponential. 

To implement BCP, do the following:
\begin{algorithmic}
\STATE \textbf{BCP} (invoked instead of line~\ref{line:decide} of Algorithm~\ref{alg:dll}):\label{line:gbcp1}
\IF{$\exists C_i \in \textit{Instance}: C_i$  is a unit clause $v^{\rho}$ at $d$} 
\STATE $\left\langle v_d,\sigma_d \right\rangle \ass \left\langle v, \lnot \rho \right\rangle$ \label{line:gbcp2}
\ELSE  
\STATE $\left\langle v_d,\sigma_d \right\rangle \ass \textit{ChooseNewLiteral\/}(v_1,\ldots,v_{d-1})$ \label{line:gbcp3}
\ENDIF
\end{algorithmic}


\begin{proposition}\label{th:bcp}
There is a formula whose shortest refutation in DLL with BCP is linearly longer than in DLL.
\end{proposition}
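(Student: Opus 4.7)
The plan is to exhibit, for every $n$, an unsatisfiable formula $\varphi_n$ on which plain DLL has a refutation of constant size while every DLL-with-BCP refutation is forced to have size $\Omega(n)$. The underlying intuition is that BCP is obligated to resolve every unit clause it sees, so by padding a small hard core with many irrelevant unit clauses on fresh variables, BCP can be compelled to spend one decision per padding literal even though those literals play no role in any refutation.

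Concretely, I would take the two-variable all-negations 2-CNF
\[
\alpha \;:=\; (b_1 \lor b_2)\land(\lnot b_1\lor b_2)\land(b_1\lor\lnot b_2)\land(\lnot b_1\lor\lnot b_2),
\]
which is unsatisfiable and, crucially, contains no unit clause; together with fresh variables $l_1,\ldots,l_n$ disjoint from $\{b_1,b_2\}$, I set
\[
\varphi_n \;:=\; \alpha \;\land\; (l_1)\land (l_2)\land\cdots\land(l_n).
\]
Since $\alpha$ is unsatisfiable, so is $\varphi_n$. For plain DLL, I would exhibit the execution that decides $b_1$ and then $b_2$, ignoring the $l_i$: this explores the four value-combinations for $\{b_1,b_2\}$ using three decisions (branch $b_1{=}0$; branch $b_2{=}0$ and flip, then backtrack and flip $b_1$; branch $b_2$ again), and terminates with unsatisfiability at level $0$. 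Hence the shortest plain-DLL refutation of $\varphi_n$ has size at most $3$.

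For the lower bound on DLL-with-BCP, I would argue by a simple invariant: as long as some $l_i$ remains unassigned, \textit{ChooseNewLiteral} is forced by BCP to pick an unassigned unit-clause variable, and the only unit clauses present in $\varphi_n$ during the pure $l_i$-phase are the singleton clauses $(l_i)$ themselves. Indeed, $\alpha$'s four clauses are binary and involve only $b_1,b_2$, so no clause of $\alpha$ is unit while all $b$-variables are unassigned; and setting $l_j:=1$ (the net effect of each BCP-forced decision after the automatic flip) neither falsifies nor simplifies any clause outside $(l_j)$ itself, so no conflict propagates and no new unit clause appears. By induction on the number of decisions, the first $n$ decisions of any DLL-with-BCP execution must assign all $n$ of the $l_i$ to $1$. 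Only afterward is the algorithm free to branch on $b_1$ or $b_2$, at which point it faces $\alpha$, which requires at least $3$ further decisions (any DLL-style refutation of the four pairwise-inconsistent clauses of $\alpha$ must eliminate all four assignments of $\{b_1,b_2\}$, so at least three decision nodes are needed). Consequently every DLL-with-BCP refutation of $\varphi_n$ has size at least $n+3$, so the ratio with the plain-DLL refutation is $\Omega(n)$.

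The main obstacle, minor though it is, is verifying cleanly that BCP cannot short-circuit the $l_i$-phase. This reduces to the two observations already noted: during that phase no clause of $\alpha$ becomes unit (because no $b_i$ is yet assigned) and no conflict arises (because the flipped value $l_j=1$ satisfies $(l_j)$ and is logically independent of every other clause). Once these are in hand, the forced-BCP discipline built into Algorithm~\ref{alg:dll} does the rest.
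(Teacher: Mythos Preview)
Your proof is correct but takes a different, more elementary route than the paper. The paper pads a three-variable unsatisfiable core (all eight 3-clauses over $\{a,b,c\}$) with implication chains $(p\lor l_1^p)\land(\lnot l_1^p\lor l_2^p)\land\cdots\land(\lnot l_{k-1}^p\lor l_k^p)$ attached to each of the six literals $p$; there are no unit clauses initially, and BCP is triggered only after the first core decision, which then cascades down the chain for the complementary literal. Your construction instead pads a two-variable core with $n$ standalone unit clauses on fresh variables, so BCP is forced to consume all of them before the solver ever touches the core. Both exploit the same mechanism---BCP must process every unit clause it sees---but your version is cleaner: the lower-bound invariant (``while some $l_i$ is unassigned, the only unit clauses present are the remaining $(l_i)$, and no conflict occurs'') is immediate, whereas the paper has to argue about which chain fires after which core decision and why at least $k$ propagations are unavoidable. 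The paper's construction has the minor advantage that all padding clauses are binary, so the separation holds even on inputs with no unit clauses; if one cared about that restriction your example would need adjustment, but for the proposition as stated your argument is complete.
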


\begin{proof}
Consider a formula consisting of (1) eight clauses, each of size 3, corresponding to all possible disjunctions between literals of variables: $a, b, c$, and (2) the following set of $k$ clauses for each literal $p \in D = \left\{a, b, c, \lnot a, \lnot b, \lnot c \right\}$: $C^p = (p \lor l_1^p) \land (\lnot l_1^p \lor l_2^p) \land (\lnot l_2^p \lor l_3^p) \land \ldots \land (\lnot l_{k-1}^p \lor l_k^p)$. The variables $L^p = \left\{ l_1^p \ldots l_k^p \right\}$ are fresh variables for each of $D$'s literals. 

Clearly, there exists an invocation of DLL refuting the formula with 7 decisions, which ignores clause set (2). 
BCP, however, forces $k$ additional, useless inferences. More specifically, if $p$ is the first literal of $D$ that is assigned, then all the literals of $L^{\lnot p}$ are assigned either before $p$ or as a result of BCP, after $p$'s assignment. 

The complexity of every invocation DLL+BCP on this example is ${\rm\Omega}(3 + 6k)$, compared with constant complexity of DLL. Hence, our formula linearly separates DLL from DLL+BCP.
\qed
\end{proof}

\begin{proposition}\label{th:bcp2}
DLL + BCP p-simulates DLL.
\end{proposition}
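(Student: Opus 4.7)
The plan is to show that any DLL refutation of size $k$ for an unsatisfiable formula $F$ on $n$ variables can be simulated by a DLL+BCP refutation of size $O(n \cdot k)$, which is polynomial in the input size. This gives the desired p-simulation. I will view a DLL refutation as a binary decision tree $T$ whose internal nodes are decisions and whose leaves are conflicts, and I will construct a corresponding DLL+BCP search tree $T'$.

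The construction proceeds by ``inserting'' BCP-forced decisions into $T$. The simulation strategy is: at each point in the DLL+BCP run, if BCP fires, follow the BCP-forced decision; otherwise, consult $T$ at the current node to pick the next decision variable and value. If the variable prescribed by $T$ has already been assigned by a previous BCP step, advance the cursor in $T$ along the branch consistent with that assignment and retry. The critical observation is that every BCP-forced literal is a logical consequence of the input clauses together with the current partial assignment, so any extra BCP assignments are consistent with the ``refutation under $T$'' that we are following; they can only trigger conflicts earlier than $T$ would, never later.

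For the size bound, two facts suffice. First, every root-to-leaf path in $T'$ assigns a distinct variable at each step, so has length at most $n$. Second, BCP-forced decisions do not introduce branching in $T'$: as noted in the description of BCP in Section~\ref{sec:BCP}, the variable chosen by BCP is set to the opposite of its unit-propagated value and then flipped immediately in the conflict analysis loop, so the corresponding node has only one effective continuation. Consequently the number of leaves of $T'$ is at most the number of leaves of $T$, which is $O(k)$, and hence $|T'| = O(n \cdot k)$.

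The main obstacle is formalizing the cursor-advance step, namely the claim that if $T$ refutes $F$, then the restriction of $T$ to any consistent partial assignment $\pi$ (pruning away branches incompatible with $\pi$) still refutes $F$ restricted to $\pi$. This amounts to soundness of $T$ as a decision-tree refutation and can be proved by straightforward induction on $T$: a leaf of $T$ is falsified by some clause of $F$, which remains falsified under $\pi$, and for each internal node either the branch consistent with $\pi$ is intact or both branches have been pruned (in which case the node itself becomes a leaf that must already be in conflict under $\pi$). Once this lemma is in hand, the simulation described above is well-defined, and the counting argument closes the proof.
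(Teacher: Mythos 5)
Your proposal is correct and is essentially an expanded version of the paper's own (one-line) argument: the paper simply observes that BCP can add at most a linear number of decisions per leaf, which is exactly your counting of at most $n$ BCP-forced decisions along each of the $O(k)$ branches, yielding an $O(n\cdot k)$ bound. The extra care you take with the cursor-advance lemma is a legitimate formalization of what the paper leaves implicit, but it is not a different route.
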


\begin{proof}
BCP may add only a linear number of decisions per leaf.
\qed
\end{proof}

\subsection{Non-Chronological Backtracking (NCB)}
\label{sec:NCB}

Non-Chronological Backtracking (NCB) is a backward pruning technique, applied immediately after a new variable for flipping is discovered by the backtracking loop.

Suppose that the algorithm is about to flip a certain left decision variable $v_d$ after finding a corresponding parent clause $C_l$. It may be the case that $C_l$ would still be a parent clause, consisting of $v_d$ and falsified literals, even if one decreased $d$ prior to the flip operation undoing some of the previously made decisions. NCB is the process of backtracking to a highest decision level $g$, so that $C_l$ is still a parent clause. 

After the above-described operation, the NCB implementation of Chaff also increases $d$ up to the closest left decision level. This step is carried out so as not to redo BCP. To implement NCB, do the following:
\begin{algorithmic}
\STATE \textbf{Non-Chronological Backtracking (NCB)} (invoked just before line~\ref{line:frds}): \label{line:gncb1}
\STATE $g \ass$ Minimal $g$, such that $\sigma_{1:g}(C_l \setminus \left\{v_d^{\lnot \sigma_d}\right\})=0$ 
\STATE $g \ass$ First left decision level $\geq g - 1$\COMMENT{An optional step}
\STATE $v_{g+1} \ass v_d; \sigma_{g+1} \ass \sigma_d; d \ass g+1$ \label{line:gncb2}
\end{algorithmic}

The NCB algorithm induces the second type of backward search pruning, which we call \emph{NCB backward search pruning} (``skipping lu-variables'' in~\cite{DBLP:conf/sat/DershowitzHN07}). Recall that we have introduced resolution backward search pruning in Sect.~\ref{sec:alltlr}. If there exist right decision levels between $g$ and $d$, the algorithm does not include their parent resolutions in the parent resolution of the flipped variable; thus these parent resolutions will not be part of the final resolution refutation of the given formula.

Figure~\ref{fig:ncb} shows the effect of NCB. A snapshot of an SSS invocation after the first conflict is depicted on the left-hand side. The algorithm identifies the fact that it can flip the value of variable $b$ at decision level 2, rather than at 3, since the conflict does not depend on the value of $c$. Hence, it unassigns $c$ before the flip. The situation that results appears on the right-hand side of Fig.~\ref{fig:ncb}. Observe that NCB backward pruning does not occur in this example, since the algorithm does not skip right decision levels.

Now we show that NCB does not change the inference power of SSS.

\begin{proposition}\label{th:dllncb}
SSS + NCB $\sim$ DLL.
\end{proposition}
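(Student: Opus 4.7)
The plan is to show that for every unsatisfiable formula, the shortest $\text{SSS}+\text{NCB}$ and DLL refutations have the same size, by establishing two inequalities, $\text{DLL} \lesssim \text{SSS}+\text{NCB}$ and $\text{SSS}+\text{NCB} \lesssim \text{DLL}$. Both will use Proposition~\ref{th:dlltrr} ($\text{TLR} \sim \text{SSS} \sim \text{DLL}$) to convert between systems.

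First, for $\text{DLL} \lesssim \text{SSS}+\text{NCB}$, I would take any $\text{SSS}+\text{NCB}$ refutation of size $k$ and use the resolution-consistency invariant: the graph $G$ at termination is a complete tree-like resolution refutation ending in $\square$, since the algorithm returns unsatisfiable only once the backtracking clause has been resolved down to the empty clause. The key step is to bound $|G| \le k$ by examining line~\ref{line:addtoresref}, the only place new nodes are added: it fires at most once per iteration of the backtracking loop, and only when the visited level is in right status with its flipped literal appearing in the current backtracking clause. Since each decision transitions from left to right status at most once in its lifetime --- after which it is either backtracked past (contributing at most one \textit{AddNode} call), left-skipped, or discarded by an NCB backjump --- summing over all decisions gives $|G| \le k$. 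Proposition~\ref{th:dlltrr} then supplies a DLL refutation of size at most $k$.

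Next, for $\text{SSS}+\text{NCB} \lesssim \text{DLL}$, I would take a DLL refutation of size $k$ and simulate it inside $\text{SSS}+\text{NCB}$ using the same sequence of decisions. By Proposition~\ref{th:dlltrr}, running this sequence in SSS already yields a refutation of size at most $k$ (the shortest DLL never skips, so every left variable gets flipped). Enabling NCB on top can only shorten the count: a backjump from level $d$ to level $g$ repositions the flipped literal at level $g+1$ and discards the intermediate decisions without ever invoking the decision step at line~\ref{line:decide}, and the optional Chaff alignment of $g$ with the nearest left decision level likewise introduces no new decisions. Consequently $\text{SSS}+\text{NCB}$ performs no more decisions than SSS, hence at most $k$.

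The hard part will be making the simulation in the second direction fully rigorous. After an NCB backjump, the partial assignment of $\text{SSS}+\text{NCB}$ no longer coincides with that of SSS on the excised intermediate levels, so ``replay the same decision sequence'' is ambiguous. My fix is a greedy re-selection: at each main-loop entry the simulating heuristic picks the earliest decision from the DLL sequence whose variable is currently unassigned, using DLL's value. An induction then shows that $\text{SSS}+\text{NCB}$'s realized decisions stay a subsequence of $D_1, \ldots, D_k$, capping the decision count at $k$; the resolution-consistency invariant guarantees termination with a valid refutation once the backtracking clause is reduced to $\square$.
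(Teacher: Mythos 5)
Your first direction (DLL $\lesssim$ SSS $+$ NCB) is sound and is essentially the paper's: an SSS$+$NCB run of size $k$ terminates with a tree-like refutation of size at most $k$ (at most one \textit{AddNode} per decision), and Proposition~\ref{th:dlltrr} converts that into a DLL run of size at most $k$. The genuine gap is in the direction SSS $+$ NCB $\lesssim$ DLL. You correctly identify the difficulty---after a backjump from $d$ to $g$ the solver's state diverges from the DLL run being replayed---but the proposed fix does not close it. The greedy re-selection only specifies \emph{which} decision to make next; it does not bound \emph{how many} decisions the run makes. The variables discarded by a backjump become unassigned and may have to be re-decided inside the subtree now rooted at the relocated flip, and again in later branches, so nothing prevents the total count from exceeding $k$. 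The claimed invariant that the realized decisions ``stay a subsequence of $D_1,\ldots,D_k$'' also treats the DLL run as a linear sequence, whereas it is a tree in which each decision is conditioned on the branch above it; a discarded variable re-decided in two different contexts already breaks the subsequence property. Finally, ``enabling NCB on top can only shorten the count'' is precisely the statement to be proved, not a premise: NCB is mandatory whenever its trigger condition holds, so you cannot decline to apply it.

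The paper sidesteps the resynchronization problem entirely with one observation you are missing: in a \emph{shortest} SSS invocation the NCB trigger never fires non-trivially. If, at the moment of flipping $v_d$, the clause $C_l \setminus \{v_d^{\lnot\sigma_d}\}$ were already falsified at some level $g < d-1$, the intervening decisions would be skippable, yielding a strictly shorter SSS invocation---the same minimality argument used inside the proof of Proposition~\ref{th:dlltrr}. Hence the shortest SSS invocation (equal in size to the shortest DLL invocation by Proposition~\ref{th:dlltrr}) is verbatim a valid SSS$+$NCB invocation of the same size, and no simulation or decision re-selection is needed. To salvage your route you would have to either prove that NCB never fires along your chosen decision order (which collapses to the paper's argument) or give an explicit accounting that charges every re-decided variable to a decision saved by the backjump; as written, neither is done.
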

\begin{proof}
NCB cannot be applied in a shortest SSS invocation, since this would yield the existence of decisions that could be skipped by the shortest invocation of SSS, which is impossible. Thus, an invocation of SSS with NCB taking the same decisions as a shortest SSS invocation is valid. (No actual points for making a Non-Chronological Backtracking exist.) On the other hand, SSS with NCB always generates a TLR refutation of at most the same size.
\qed
\end{proof}

\subsection{1UIP-based Conflict-Directed Backjumping (CDB)}
\label{sec:CDB}

1UIP-based Conflict-Directed Backjumping (CDB) is yet another backward search pruning technique.

A Unique Implication Point (UIP)~\cite{GRASP1} is a well-known concept, whose name is rooted in the implication-based approach to conflict analysis. First, we express this notion in our framework.

A left decision block of a left decision level $h$, $LDB(h)$ is a subset of decision levels that includes $h$ and every right decision level, assigned after $h$, but before the next left decision level (if available).\footnote{Our definition of decision level corresponds to that of GRASP~\cite{GRASP1}. Chaff's~\cite{Chaff2001} decision level is what we call a left decision block.}

Suppose that the SSS is backtracking over a decision level $d$. Let $g$ be the highest left decision level. A right decision variable $v_{d}$ is a \emph{Unique Implication Point (UIP)}, if $v_d$ is the only variable assigned at $LDB(g)$ that appears in the backtracking clause $\textit{NewParent}$. Backtracking may find more than one UIP. UIPs are counted according to their order during the backtracking phase. 

1UIP-based CDB is the following technique: once the first UIP (1UIP) variable $v_d$ is discovered during backtracking, continue as if $v_d$ was a left decision variable, assigned instead of $v_g$, whose parent clause is the current backtracking clause. One way to think about 1UIP-based CDB is as substituting the decision $v_g$ by $v_d$ a-posteriori. Note that a left decision variable can never be a UIP in our notation.



This is implemented as follows:
\begin{algorithmic}
\STATE \textbf{1UIP-based CDB} (invoked just after line~\ref{line:dminus1}):  \label{line:g1uip1}
\STATE $g \ass$ Highest left decision level
\IF{$v_d^{\lnot \sigma_d} \in C_{\textit{NewParent}}$ and $\sigma_{1:g-1}(C_{\textit{NewParent}} \setminus \left\{v_d^{\lnot \sigma_d}\right\}) = 0$}
\STATE $v_g \ass v_d; \sigma_g \ass \sigma_d; Parent(g) = \textit{NewParent}; d \ass g$ \label{line:g1uip2}
\ENDIF
\end{algorithmic}

See the transformation of Fig.~\ref{fig:sdll} into Fig.~\ref{fig:after} for an example of the effect of 1UIP-based CDB. After the algorithm learns a new resolvent clause $\lnot b$ during backtracking, it discovers that it contains only one variable, $b$, assigned after the highest left decision level 1. So, it substitutes $b$ for $\lnot a$. The parent clause and parent resolution are updated to the backtracking clause and its derivation.

1UIP-based CDB induces the third type of backward search pruning, which we call \emph{UIP backward search pruning} (``skipping of inactive lf-variables, connected to the conflicting clause vertices, but not dominated by the pivot variable'' \cite{DBLP:conf/sat/DershowitzHN07}). Consider a right variable $v_j$ of the last left decision block, such that $j \leq d$, where $d$ is the decision level of the UIP variable. Its parent resolution is not included in the newly derived parent resolution; thus it will not be included in the final resolution refutation. In our example, the parent resolution of $\lnot b$ that consists of a single clause $a \lor b$ is pruned.
We underscore the fact that we do not consider 1UIP-based Conflict Clause Recording in this section, but only 1UIP-based CDB. We will see that these two concepts are not necessarily related.

The inference power of SSS with 1UIP-based CDB remains the same.

\begin{proposition}\label{th:dll1uip}
SSS + 1UIP-based CDB $\sim$ DLL.
\end{proposition}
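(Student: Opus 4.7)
The plan is to mirror the structure of the proof of Proposition~\ref{th:dllncb}. Since Proposition~\ref{th:dlltrr} already gives TLR $\sim$ SSS $\sim$ DLL, it suffices to establish both SSS + 1UIP-CDB $\gtrsim$ SSS and SSS $\gtrsim$ SSS + 1UIP-CDB. The first inequality will follow by exhibiting an SSS + 1UIP-CDB invocation of the same size as the shortest SSS invocation. The second will follow from the observation that every SSS + 1UIP-CDB invocation outputs a valid TLR refutation no larger than its number of decisions, so that DLL (hence SSS) can match it.

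For SSS + 1UIP-CDB $\gtrsim$ SSS, I would first take a shortest SSS invocation and argue that the 1UIP-CDB substitution can never actually fire on it. The triggering condition is that during backtracking one reaches a right decision level $d$ with the backtracking clause $\textit{NewParent}$ satisfying $\sigma_{1:g-1}(\textit{NewParent}\setminus\{v_d^{\lnot\sigma_d}\})=0$, where $g$ is the top of the current left decision block. In this case, the derived clause would already be a valid parent clause for $v_d$ at level $g$, so the decisions at levels $g,g+1,\dots,d-1$ play no role in producing the eventual flip of $v_d$. As in the NCB argument, this contradicts the minimality of the assumed invocation: a shorter SSS run could have been constructed by choosing $v_d$ as the left decision at level $g$. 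Since 1UIP-CDB never fires, SSS + 1UIP-CDB taking the same decision sequence as the shortest SSS invocation is valid and has the same size.

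For SSS $\gtrsim$ SSS + 1UIP-CDB, I would show that the resolution-consistency invariant is preserved by 1UIP-CDB's update to $Parent(g)$. When CDB substitutes $v_d$ for $v_g$ and sets $Parent(g) := \textit{NewParent}$, the clause $\textit{NewParent}$ is already a root of a TLR derivation by the backtracking invariant, and its literals other than $v_d^{\lnot\sigma_d}$ are falsified by $\sigma_{1:g-1}$. Hence, after the update, $G$ is still a valid TLR refutation-under-construction, and on termination its size is at most the number of SSS + 1UIP-CDB decisions. Invoking TLR $\sim$ DLL gives a DLL proof of at most the same size.

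The main obstacle, as in Proposition~\ref{th:dllncb}, is turning the intuitive ``skippable decisions'' claim into a rigorous statement: one must be careful that choosing $v_d$ at level $g$ in place of $v_g$ does not merely shift but genuinely eliminates the intervening decisions. The cleanest justification is to note that after the hypothetical substitution the parent resolution of $v_d$ at level $g$ uses only clauses falsified by $\sigma_{1:g-1}$, so the subtree of the original SSS run carried out between levels $g$ and $d$ contributes nothing to the final TLR refutation and can be excised without affecting subsequent search, shrinking the invocation strictly -- the required contradiction.
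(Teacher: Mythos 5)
Your decomposition and your second half (every SSS + 1UIP-CDB invocation still emits a valid TLR refutation of size at most the number of decisions, so SSS/DLL can match it) agree with the paper. The gap is in the converse direction. You assert, by analogy with the NCB proof, that the 1UIP trigger can never fire on a shortest SSS invocation, because the decisions at levels $g,\dots,d-1$ ``play no role'' and could be excised, contradicting minimality. That assertion is false. The backtracking clause $\textit{NewParent}$ that satisfies the 1UIP condition indeed avoids those levels, but the \emph{other} branch of $v_d$ --- the clause $Parent(d)$ and its derivation --- in general does depend on them, and in the non-CDB run that branch is resolved into the clause that ultimately flips $v_g$; nothing is excisable. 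Concretely, for $(\lnot p\lor q)\land(\lnot q\lor r)\land(\lnot q\lor\lnot r)\land(p\lor s)\land(p\lor\lnot s)$, the SSS run deciding $p\ass 1$, then $q\ass 0$ (flipped with parent $\lnot p\lor q$), then $r\ass 0$ (flipped, conflict, backtrack) produces the backtracking clause $\lnot q$ at level $2$ with $g=1$, so $q$ is a 1UIP; yet this run uses $4$ decisions, which is optimal (the minimal TLR size of this formula is $4$), and the decision $p\ass 1$ cannot be dropped, since $\lnot p\lor q$ is needed on the other side of the refutation. Performing the substitution on this run actually lengthens it to $5$ decisions, so ``run SSS + 1UIP-CDB on the same decision sequence'' fails as well.

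The paper therefore proves only an equal-size transformation, not a strict shrinking: at the first point where $v_d$ would become a 1UIP, it swaps the order in which the two values of $v_d$ are explored. The previously right subtree is explored first and yields $\textit{NewParent}$ as the parent clause of the flip of $v_d$; the previously left subtree then yields a backtracking clause equal to the old $Parent(d)$, which must contain a literal assigned after level $g-1$ (otherwise $g$ would be an obsolete decision), so the 1UIP condition does not hold there; the resolvent passed below level $d$ is unchanged, and the decision count is preserved. Iterating this over the first UIP occurrence gives an equal-size invocation on which the trigger never fires. Some such size-preserving reordering is needed; the minimality contradiction you rely on is not available.
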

\begin{proof}

SSS with 1UIP-based CDB is not more powerful than SSS, since it always produces a TLR refutation of at most the same size, and TLR $\sim$ SSS by Prop.~\ref{th:dlltrr}.

Consider now a shortest SSS invocation (equal in size to a shortest DLL invocation by Prop.~\ref{th:dlltrr}), referred to below as the ``original invocation''. If a 1UIP variable is never encountered, the claim is proven. Let $v_d$ be the first variable, discovered to be a 1UIP variable. Let SSS be at the point of making the decision $v_d \ass \sigma_d$, such that after flipping $v_d$ and backtracking, $v_d$ becomes a 1UIP variable. We show that there exists an ``updated'' SSS invocation of the same size, where the 1UIP-based CDB is not made at this point. Iterative applications of this principle result in receiving an SSS invocation, such that the condition for making a 1UIP-based CDB never holds; hence it can serve as an SSS with 1UIP-based CDB invocation of the same size as the original SSS invocation.


Let $g < d$ be the highest left decision level. The backtracking clause $C_{\textit{NewParent}}$ does not contain any literal assigned after $g$ by definition of the 1UIP-based CDB. Thus, $C_{Parent(d)}$ must contain at least one variable assigned after level $g-1$, since otherwise $g$ would be an obsolete decision; thus there would exist an invocation with fewer decisions than any shortest one. 

We claim that it is sufficient to make the decision $v_d \ass \lnot \sigma_d$ first. Explore the relevant (previously right) subtree of the original invocation; then flip the value of $v_d$ and explore the relevant (previously left) subtree of the original invocation to eliminate the situation when $v_d$ becomes a 1UIP variable. We ensure that new 1UIP variables are not created, the validity of the invocation is preserved and the number of decisions remains the same.

Suppose that the updated invocation has decided $v_d \ass \lnot \sigma_d$. A 1UIP-based CDB cannot happen before flipping $v_d$ and backtracking to $d$ in the updated invocation, since $v_d$ is the first 1UIP variable, discovered by the original invocation by construction. The clause $C_{\textit{NewParent}}$ will be generated as the parent clause for $d$. After the flip, the left subtree of the original invocation under $v_d \ass \sigma_d$ is explored. The newly generated resolvent clause $C_{\textit{NewParent}'}$ is identical to $C_{Parent(d)}$ of the original invocation; thus it must contain at least one variable, assigned after $g-1$. Hence, no 1UIP-based CDB is made at this point at present. The generated resolvent clause after backtracking beyond $d$ is exactly the same in the original and updated invocations. Making the same decisions as in the original invocation from this point on will result in a valid SSS invocation with the same number of decisions.
\qed
\end{proof}



\subsection{The Power of a SAT Solver without Conflict Clause Recording}

Finally, we have that DLL is at least as strong as Algorithm~\ref{alg:dll} with Boolean Constraint Propagation, Non-Chronological Backtracking and 1UIP-based Conflict-Directed Backjumping.

\begin{proposition}\label{th:tlc}
DLL $\lesssim$ SSS + BCP, 1UIP-based CDB, NCB.
\end{proposition}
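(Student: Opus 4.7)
The plan is to reduce the claim to the equivalence TLR $\sim$ DLL already established in Prop.~\ref{th:dlltrr}. The key observation is that, even when augmented with BCP, 1UIP-based CDB, and NCB, Algorithm~\ref{alg:dll} continues to incrementally construct a valid tree-like resolution refutation $G$, whose size is bounded by the number of decisions made during the invocation.

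First I would argue that each of the three enhancements preserves the flip-consistency and resolution-consistency invariants of SSS. BCP only restricts the choice at line~\ref{line:decide} to the negation of a unit literal, leaving the rest of Algorithm~\ref{alg:dll} untouched; conflict analysis and backtracking continue to maintain $G$ exactly as before. NCB changes the decision level at which the flip is performed but, as argued in Sect.~\ref{sec:NCB}, keeps $C_l$ a valid parent clause at the new level. 1UIP-based CDB replaces the left-decision variable $v_g$ by the right-decision variable $v_d$ and sets its parent clause to the current backtracking clause, which is already a root of a valid TLR refutation by the backtracking invariant. Thus $G$ remains a valid TLR refutation throughout.

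Next I would bound $|G|$ by the invocation size $k$. Nodes are added to $G$ only through the $\textit{AddNode}$ call at line~\ref{line:addtoresref}, and this call is triggered at most once per right-decision level during backtracking. Since every right-decision level originated as a decision counted by the size measure, $|G| \leq k$. Crucially, none of the three enhancements introduces additional $\textit{AddNode}$ calls: BCP adds only ordinary decisions, while NCB and 1UIP-based CDB merely prune or compress decision levels, which can only decrease $|G|$.

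The conclusion is then immediate: by Prop.~\ref{th:dlltrr} a TLR refutation of size $|G|$ yields a DLL refutation of size at most $|G| \leq k$, so DLL $\lesssim$ SSS + BCP + 1UIP-based CDB + NCB. The main subtlety I expect lies in the bookkeeping for NCB and 1UIP-based CDB acting in combination: NCB retargets the flip to a higher level, 1UIP-based CDB substitutes a right variable for a left one, and under both mechanisms several ``lost'' decisions may be absorbed by a single flip. One has to verify that even after such compressions the ``at most one $\textit{AddNode}$ per decision'' principle continues to hold, but since both mechanisms only prune or merge, never introduce, TLR nodes, the bound $|G| \leq k$ is safe.
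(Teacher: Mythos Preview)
Your proposal is correct and follows essentially the same route as the paper: observe that any run of SSS with BCP, NCB and 1UIP-based CDB still outputs a valid TLR refutation of size at most the number of decisions, and then invoke the equivalence TLR $\sim$ DLL from Prop.~\ref{th:dlltrr}. The paper's own proof is a one-line version of exactly this argument; your write-up simply spells out why each enhancement preserves the resolution-consistency invariant and why the ``at most one \textit{AddNode} per decision'' bound survives the level-compressions performed by NCB and 1UIP-based CDB.
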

\begin{proof}
Any invocation of SSS with BCP, 1UIP-based CDB and NCB produces a TLR refutation, whose size is at most the number of decisions. 
\qed
\end{proof}

\section{A Modern SAT Solver Algorithm}
\label{sec:mssa}

To complete the picture of transforming Total Assignment Enumeration into a modern SAT solver, we need to define Conflict Clause Recording in our terminology. \emph{Conflict Clause Recording (CCR)} is an enhancement of DLL, allowing the algorithm to use some or all of the resolvent clauses for conflict identification (and propagation, if BCP is used). These clauses are called \emph{conflict clauses}. A similar approach to CCR was used in~\cite{ClRes,NieuwenhuisetalJACM2006}; however the literature on practical SAT solver design~\cite{Chaff2001,DBLP:conf/sat/EenS03,siege} uses an implication graph-based approach~\cite{GRASP1}. Our framework detaches CCR from other algorithms related to conflict analysis, such as Conflict-Directed Backjumping. For example, one can implement an algorithm using a 1UIP-based CDB, but without recording conflict clauses at all. With CCR, Algorithm~\ref{alg:dll} will still terminate with a valid complete resolution refutation; however, the refutation is no longer guaranteed to be tree-like. To turn Algorithm~\ref{alg:dll} into an algorithm that implements Chaff's scheme for CCR, it is sufficient to allow it to use all parent clauses for conflict identification: 

\begin{algorithmic}
\STATE \textbf{Parent-based Conflict Clause Recording} (invoked just before line~\ref{line:dis0})
\IF{$LRStatus(d)=\LL$ and $v_d^{\lnot \sigma_d} \in C_{\textit{NewParent}}$} \label{line:gfccr1}
\STATE $\textit{Instance} \ass Instance \cup \left\{\textit{NewParent}\right\}$ \label{line:gfccr2}
\ENDIF
\end{algorithmic}



1UIP-based CDB, NCB and Parent-based CCR with BCP constitute Chaff's conflict analysis engine exactly. The most modern SAT solvers, such as   Minisat~\cite{DBLP:conf/sat/EenS03}, use this scheme, optionally enhanced by minimization~\cite{DBLP:conf/sat/EenS03}. A summary of steps that should be carried out to implement a modern SAT solver is provided in the conclusion.

\section{Conclusions}
\label{sec:concl}

We have proposed a new framework for presenting and analyzing the functionality of a modern DLL-based SAT solver. We described the following enhancements to the DLL algorithm: (1) Parent clause maintenance; (2) Boolean Constraint Propagation; (3) Non-Chronological Backtracking; (4) 1UIP-based Conflict-Directed Backjumping; (5) Parent-based Conflict Clause Recording. The above-mentioned algorithms are not interrelated in our approach. We exploited the inherent interrelation between backtracking and resolution not using the notion of implication graph. We demonstrated that DLL with 1UIP-based CDB, DLL with NCB, plain DLL and TLR are equally strong, and provided a family of formulas, whose shortest refutation in DLL with BCP is linearly longer than in DLL. We have also shown that parent clause maintenance, NCB, 1UIP-based CDB and BCP do make DLL stronger. We related the concept of search pruning to the size of the resolution refutation, derived by the algorithm, and pointed to the contribution of various algorithm to search pruning. 

The following is a suggestion how to implement a modern SAT solver: (1) Implement Algorithm~\ref{alg:dll} with Non-Chronological Backtracking (Sect.~\ref{sec:NCB}), 1UIP-based Conflict-Directed Backjumping (Sect.~\ref{sec:CDB}), Parent-based Conflict Clause Recording (Sect.~\ref{sec:mssa}) and Boolean Constraint Propagation (Sect.~\ref{sec:BCP}), using modern data structures~\cite{pMinisat}; (2) use a modern restart strategy, such as~\cite{DBLP:conf/sat/RyvchinS08}, and decision heuristic~\cite{DBLP:conf/sat/DershowitzHN05}. In the formal verification domain, use local conflict clause recording~\cite{DBLP:conf/sat/DershowitzHN07} and the implementation in \cite{Chaff2004} of decision stack shrinking~\cite{NadelThesis}; (3) use an efficient preprocessor before embarking on the search~\cite{DBLP:conf/sat/EenB05}.

The present work can serve as a basis for a future work on both the inference power of a modern SAT solver and on the practical SAT solver design. 
\nopagebreak[0]
\bibliographystyle{plain}
\bibliography{2009}
\end{document}